\newtheorem{definition}{Definition}%
 \newtheorem{theorem}{THeorem}%
\begin{document}

\title[On Some Closure Properties of \textit{nc-eNCE} Graph Grammars]{On Some Closure Properties of \textit{nc-eNCE} Graph Grammars}


\author[1,2]{\fnm{Jayakrishna} \sur{Vijayakumar}}\email{vjayakrishna@amaljyothi.ac.in}
\author*[3]{\fnm{Lisa} \sur{Mathew}}\email{lisamathew@amaljyothi.ac.in}

\affil[1]{\orgdiv{Department of Computer Science and Engineering}, \orgname{Amal Jyothi College of Engineering}, \orgaddress{\street{Kanjirappally}, \city{Kottayam}, \postcode{686 518}, \state{Kerala}, \country{India}}}

\affil[2]{\orgdiv{Research scholar}, \orgname{APJ Abdul Kalam Technological University}, \orgaddress{\street{CET Campus}, \city{Thiruvananthapuram}, \postcode{695 016}, \state{Kerala}, \country{India}}}

\affil[3]{\orgdiv{Department of Basic Sciences }, \orgname{Amal Jyothi College of Engineering}, \orgaddress{\street{Kanjirappally}, \city{Kottayam}, \postcode{686 518}, \state{Kerala}, \country{India}}}

\abstract{In the study of automata and grammars, closure properties of the associated languages have been studied extensively. In particular, closure properties of various types of graph grammars have been examined in (Rozenberg and Welzl, Inf. and Control,1986) and (Rozenberg and Welzl, Acta Informatica,1986). In this paper we examine some critical closure properties  of the  \textit{nc-eNCE}  graph grammars discussed in (Jayakrishna and Mathew, Symmetry 2023) and (Jayakrishna and Mathew, ICMICDS 2022).
}

\keywords{ Graph Grammars, Confluence, Connection Instructions, Regular Control,  Jumping Graph Grammars, Closure Properties, Disjoint Sum, Kleene Sum, Union,  Star Concatenation, Chain Concatenation}

\maketitle
\section{Introduction}
Graph grammars  have been a challenging area of research  in theoretical computer science  since their introduction in the sixties \cite{Ehrig_1978,Ehrig_1986,Ehrig_1992,Engelfriet05,Kreowski_1986,pavlidis1966,Rozenberg_97}. They provide   a framework for specifying the structure and transformation of graphs and hence are capable of generating a wide range of graphs using a small set of rules.  Thus the creation of complex structures that would be difficult to specify using other methods becomes possible. They  are often used in computer science to specify the structural and behavioral aspects of software systems. For instance, they can be used to specify the structure of a database schema \cite{Ehrig_1978}, a programming language or a software architecture. They also find application  in other fields, such as biology, to model the structure and behavior of biological systems. Moreover, they can be used to specify the behavior of a system over time,  thus facilitating   the modeling of dynamic systems.
\par  Graph grammars consist primarily of a set of rules which specify how to iteratively transform a graph by adding, deleting, or modifying its nodes and edges- \cite{Rozenberg_97,Roz_1997}. A typical graph grammar consists of $(i)$  a start graph $S$ and  $(ii)$ a set $P$ of transformation rules (productions).  These rules  help to convert the host graph $H$ (initially the host graph is $S$) to the desired transformed graph. The rules  have an crucial role in the transformation and are of the form $P(M, D, E)$, where $M$ is the mother graph to be replaced (present in $H$), $D$ (daughter graph) is the graph  which replaces $M$ , 
and $E$ is the embedding mechanism which describes how the daughter graph $D$ is connected to the the remaining portion of the host graph $(H'=H-M)$ . 

\par  On the other hand, a class of graph languages can have two types of operations defined on it- set operations and operations on graphs belonging to the language. Set operations include union, intersection and complementation while operations on graphs include disjoint sum (graph theoretical union), graph complementation and others including some new operations which we define here.  Closure properties of some special types types of graph grammars have been examined in \cite{boundary,closure}  Here, we examine some salient closure properties  of the  \textit{nc-eNCE}  graph grammars discussed in \cite{Jayakrishna_20}and \cite{jk_jumping}.
\par In Section \ref{Prelim}  we recall some preliminary notions about \textit{nc-eNCE}.  Section \ref{op}  of this paper  introduces a variety of operations on Graph Languages and a few other related concepts. In Section \ref{clos} we prove that these languages are closed under the operations of union, disjoint sum and  Kleene sum. Closure properties under two more operations - Chain concatenation and star concatenation are also discussed. Since the   proofs of the closure properties with respect to these two operations employ  jumping connection instructions the class of jumping \textit{nc-eNCE} graph grammars are closed under these operations. We conclude our discussion in Section \ref{conclo} 
\section{Preliminaries }\label{Prelim}
Gluing and connecting are the two popular mechanisms used to embed the daughter graph in the mother graph.In the gluing approach \cite{Ehrig_1978,Ehrig_1986,Ehrig_1992,Kreowski_1986}, certain nodes and edges  of $H'$ and of $D$ are identified for fusion. The edges thus identified are those which were  originally incident to $M$ in $H$. There are different ways of identifying the corresponding nodes in the host and daughter graphs. One way is to use labels or attributes attached to the nodes/edges. Another way is to use a mapping function that associates each node/edge in the daughter graph with a node in the host graph. 
\par  Hierarchical gluing is a technique that allows us to glue multiple graphs in a hierarchical manner. In this approach, we first glue two or more graphs to form a larger graph and then repeat the process with the remaining graphs until the desired graph is obtained. The gluing approach is commonly used in graph compression and graph visualization, where it can simplify the display of complex graphs by representing subgraphs with a single node. 

The connecting approach is another way of embedding  daughter graphs in the host graph. In this approach, the daughter graph is connected to the host graph by adding edges between designated nodes in the two graphs. The edges represent the relationships between the nodes in the daughter graph and the nodes in the host graph.

The connecting approach is suitable when the embedded graph is small compared to the host graph. In this case, it may not be necessary to replace nodes in the host graph with the daughter graph. Instead, the relationships between the nodes in the daughter graph and the nodes in the host graph can be represented using edges. This  is known as  edge replacement. The connecting approach is usually used in graph query processing, where it is used to match smaller graphs against a larger graph. It may also  be used in graph editing tools, where it can be used to insert and delete subgraphs within a larger graph.

 The gluing approach is advantageous when the embedded graph needs to be modified frequently or when the embedded graph is large compared to the host graph. The gluing approach is generally more expensive than the connecting approach   in terms of computational complexity, since  it requires the replacement of nodes in the host graph with the daughter graph. The connecting approach, on the other hand, only requires the addition of edges between the daughter graph and the host graph.

\subsection{Graph Grammars with Regular Control}
Graph grammars with regular control are a variant of graph grammars that use regular expressions to specify the control flow of the transformation rules. Regular expressions are used to specify the conditions under which a production rule can be applied. The evolution of graphs are determined by the regular expression which controls the  order of application of  rules. Such graph grammars are known as non-confluent graph grammars. One such variant of this type specified in \cite{Jayakrishna_20} is Non-confluent Edge and Node Controlled Embedding $(nc-eNCE)$ Graph Grammar.

Non-confluent Edge and Node Controlled Embedding $(nc-eNCE)$ Graph Grammar \cite{Jayakrishna_20}: Restricting the sequence of production rules utilized results in this type of graph grammar that introduces a small amount of determinism to the essentially non-deterministic concept of a graph grammar. 

\begin{definition} \cite{jkicmc21, Jayakrishna_20}
\label{ncence}
A construct $ncG= $( $\Sigma, \Delta, \Gamma, \Omega, P, G_S, R(P) $) is known as an $nc-eNCE$ graph grammar where
\begin{itemize}
\item []$\Sigma$  and $\Gamma$  are  sets of symbols used to label nodes and edges respectively,
\item []$\Delta $ and $\Omega$ are  the collections of terminal symbols in $\Sigma$ and $\Gamma$ respectively, 
\item []A production rule in $P$, $p:  A \rightarrow ( D, C )$ acting on the mother node $M$ with label $A$ has a collection C of connection instructions  $(a,p\mid q ,B)$ associated with it. Here $x$ with label $a$ is a neighbor of $M$ and $B$ is a node in $D$. The edge $p$ which connected $x$ and $M$ is removed and  a new edge $q$ is established between $x$ and $B$.
\item []$G_S$ is the initial graph,
\item []The regular control, $R(P)$, regulates the sequence of application of the production rules.
\end{itemize}
\end{definition}
Several variants of $nc-eNCE$ graph grammars such as nc-$eNCE$ graph grammar with deletion $(dnc-eNCE)$,  nc-$eNCE$ graph grammar with $\psi$ labelled edges $(\psi nc-eNCE)$, etc have been discussed in \cite{Jayakrishna_20}.
\subsection{Jumping Graph Grammar}
A jumping graph grammar ($JGG$) differs from its conventional counterpart(jumping gramars) \cite{kvrivka2015jumping, medunazemek2014} in the fact that in addition to the normal connections established as part of daughter graph embedding with vertices which were adjacent to the vertices of the mother graph,  we can also create extra edges known as jumping edges to establish connections from vertices of daughter graph to other vertices  in the rest of the host graph. The formal definition of jumping graph grammars is as follows.\\
\begin{definition}\label{Generaljgg}\cite{jk_jumping}
A Jumping Graph Grammar is a  construct $JGG=$($\Sigma, P, C, S$), where
\begin{itemize}
\item [] $\Sigma$ is the finite set of node labels
\item []  $P$ is the finite set of graph production rules
\item []   $C$ is the connection instruction
\item []  $G_S$ is the start/initial graph
\end{itemize}
Here $P$ is of the form $P: M\rightarrow D$ where $M$ is the mother graph to  be replaced by daughter graph $D$.

\end{definition}
In general graph grammars, the connection instruction $C$ establishes edges from the designated nodes of $D$ to the nodes of $M$ which were having incident edges to the mother graph that were removed during replacement. But in jumping graph grammars, the connection instruction has an extended capability to establish edges from designated nodes of $D$ to any of the nodes present in $G_S \backslash M$.\\
The language represented by the jumping graph grammar $JGG$ is \\
$L$($JGG$)$~={ \{G | G_S \overset{*}{\implies} G \} }$ where, $G$ is the set of graphs obtained from $G_S$ by recursively applying rules in $P$.

\subsection{Non-confluent Edge and Node Controlled Embedding Jumping Graph grammars (\textit{nc-eNCE-JGG})}
We extend the definition of $nc-eNCE$ graph grammar in \cite{jkicmc21} for defining this new grammar.
 Formally, we have:
\begin{definition} \label{Def2}
An $nc-eNCE-JGG$ graph grammar is a 7 tuple: $ncJGG= \text{(}\Sigma, \Delta, \Gamma, \Omega, $ $P, G_S, R\text{(}P\text{)} $\text{)} where
\begin{itemize}
\item []$\Sigma$ is the set of node labels,
\item []$\Delta$ is the set of terminal node labels,
\item []$\Gamma$ is the set of edge labels,
\item []$\Omega$ is the set of terminal  edge labels,
\item []The productions in $P$ are of  one of the  forms $p:  A \rightarrow \text{(} D, C $\text{)} where $A$ is the label of the mother node and $D$ is the daughter graph.
The connection instruction $C$ can be in any one of the following forms
\begin{enumerate}
\item ($a,p\mid q ,B$), where $p$ and $q$ are edge labels, $a$ is the node label of one of the neighbours of the mother node and $B$ is a node in $D$. The interpretation is that we find an edge labelled $p$ in the host graph which had connected a node $x$ labelled $a$ to the mother node and create a new edge labelled $q$ between $x$ and the node $B$ in the daughter graph
    \item  ($a,\alpha,b$\text{)}, where $a$ is the node label of one of the nodes in $D$, $b$ is the label of any of the nodes in remaining graph after removing the mother node and $\alpha$ is the label of the new edge connecting $a$ and $b$.    
\end{enumerate}
\item []$G_S$ is the start/initial graph,
\item []$R\text{(}P$) is a regular control which specifies the order of application of the productions,
Hence this grammar becomes restricted or non-confluent.
\end{itemize}
\end{definition}
The language represented by this grammar is $L\text{(}ncJGG\text{)}={ \{ G \in G_\Delta |G_S \overset{R\text{(}P\text{)}}{\implies}G \} }$ where {$G_\Delta$} is the set of graphs containing only terminal nodes. Hence the language of an $nc-eNCE-JGG$ graph grammar is a  set of graphs whose nodes have terminal labels, generated by applying a series of productions in the order $p_1, p_2, \cdots, p_n$ where $p_1p_2\cdots p_n$ is a word in the language represented by the regular control $R\text{(}P$).
\section{Operations on Graph Languages}\label{op}
\begin{definition}
 The union of two  graph languages $L_1$ and $L_2$ is the graph language obtained by taking the usual union of the two families of graphs i.e. $L=L_1\cup L_2$
\end{definition}
\begin{definition}
 The disjoint sum of two  graph languages $L_1$ and $L_2$ is the graph language$L_1\oplus L_2$ obtained by taking the disjoint sum of sample graphs from  two languages i.e. $L_1\oplus L_2=\{G_1+ G_2|G_1\in L_1,G_2\in L_2\}$.
\end{definition}
\begin{definition}
 The Kleene sum of a  graph language $L$ is the graph language obtained by taking the disjoint sum of  a countable number of sample graphs from $L$  i.e. $L^{\oplus}=\{G_1+ G_2+\cdots G_n|G_i\in L,1\leq i\leq n\}$
\end{definition}
\begin{definition}
    Let $L$ be a graph language generated by a graph grammar $G$, the set of graphs obtained by ignoring the labels of graphs in $L$ forms the  underlying graph language $U(L)$.
\end{definition}
\begin{definition}
 The chain concatenation  of two   graph languages $L_1$ and $L_2$ which do not contain null graphs is the graph language obtained by  joining the two designated nodes chosen from sample graphs $G_1$ and $G_2$ in $L_1$ and $L_2$ respectively and is denoted  by$L_1\odot L_2$
 \end{definition}
 \begin{definition}
 The star concatenation of a  graph language $L_1$ is the graph language obtained by connecting  designated nodes  in  a countable number of sample graphs from $L_1$ to a central node. This is denoted by $L_1^ \circledast$.  
\end{definition}
\section {Closure Properties of Graph Languages Generated Using \textit{nc-eNCE} Graph Grammar}\label{clos}
The class of graph languages generated using $nc-eNCE$ Graph Grammar is closed under union,  disjoint sum, Kleene sum and chain concatenation.
\begin{theorem}
The class of string languages generated using nc-eNCE graph grammars is closed under the operation union.
\end{theorem}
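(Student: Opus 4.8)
The plan is to exhibit, for arbitrary $nc$-$eNCE$ grammars $ncG_1$ and $ncG_2$ generating graph languages $L_1$ and $L_2$, a single $nc$-$eNCE$ grammar $ncG$ with $L(ncG)=L_1\cup L_2$, obtained by ``disjointifying'' the two grammars and splicing their regular controls together behind a fresh branching node.

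\textbf{Normalization.} Write $ncG_i=(\Sigma_i,\Delta_i,\Gamma_i,\Omega_i,P_i,G_{S_i},R(P_i))$ for $i=1,2$. By renaming the nonterminal node labels in $\Sigma_i\setminus\Delta_i$ and the production names in $P_i$ — neither of which changes $L_1$ or $L_2$ — I may assume $(\Sigma_1\setminus\Delta_1)\cap(\Sigma_2\setminus\Delta_2)=\emptyset$ and that $P_1$ and $P_2$ use disjoint sets of rule names; terminal labels may be shared, the union being taken over a common terminal alphabet.

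\textbf{Construction.} Pick a fresh symbol $Z\notin\Sigma_1\cup\Sigma_2$, let $G_S$ be the one-node graph whose single (isolated) node carries label $Z$, and add two productions $\pi_1:Z\rightarrow(G_{S_1},\emptyset)$ and $\pi_2:Z\rightarrow(G_{S_2},\emptyset)$; the connection instruction set is empty because the $Z$-node has no incident edges, so nothing must be redirected. Put $P=P_1\cup P_2\cup\{\pi_1,\pi_2\}$, $\Sigma=\Sigma_1\cup\Sigma_2\cup\{Z\}$, $\Delta=\Delta_1\cup\Delta_2$, $\Gamma=\Gamma_1\cup\Gamma_2$, $\Omega=\Omega_1\cup\Omega_2$, and set $R(P)=\pi_1\,R(P_1)\ \cup\ \pi_2\,R(P_2)$. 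Since $R(P_1),R(P_2)$ are regular and regular languages are closed under left concatenation by a single symbol and under union, $R(P)$ is regular, so $ncG=(\Sigma,\Delta,\Gamma,\Omega,P,G_S,R(P))$ is a legitimate $nc$-$eNCE$ grammar.

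\textbf{Correctness.} Every control word of $ncG$ begins with $\pi_1$ or $\pi_2$. If it begins with $\pi_1$, the first step replaces the isolated $Z$-node by $G_{S_1}$ verbatim and the tail of the word lies in $R(P_1)$; because the nonterminal alphabets are disjoint and terminal nodes are never rewritten, from $G_{S_1}$ onward only rules of $P_1$ are ever applicable, so the derivation is literally one of $ncG_1$, yielding exactly the terminal graphs of $L_1$. The case of $\pi_2$ yields $L_2$ symmetrically, and conversely any derivation of $ncG_1$ (resp.\ $ncG_2$) is realized in $ncG$ by prepending $\pi_1$ (resp.\ $\pi_2$); hence $L(ncG)=L_1\cup L_2$. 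I expect the only genuine subtlety to be this last ``non-interference'' point — arguing rigorously that after branching into one component no rule of the other is ever applicable — which rests entirely on the disjointness of nonterminal labels secured in the normalization step and on the fact that $nc$-$eNCE$ productions rewrite a single nonterminal node at a time; the vacuity of the embedding for $\pi_1,\pi_2$ and the harmlessness of the relabeling are the remaining routine checks.
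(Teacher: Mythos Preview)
Your construction is essentially identical to the paper's: a fresh start node with two productions branching to the original start graphs under empty embedding, productions $P_1\cup P_2$ plus the two new rules, and regular control $\pi_1 R(P_1)+\pi_2 R(P_2)$. If anything you are more careful than the paper, which omits the nonterminal-disjointification step and the non-interference argument that you rightly flag as the one genuine subtlety.
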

\begin{proof}
 
 Consider two $nc-eNCE$ graph grammars $ ncG_1 = ( \Sigma_1, \Delta_1, \Gamma_1, \Omega_1, P_1, S_1,$ $  R_1(P_1))$  and $ncG_2 = ( \Sigma_2, \Delta_2, \Gamma_2, \Omega_2, P_2, S_2, R_2(P_2))$   generating graph languages $L_1$ and $L_2$ respectively. Assuming that $S$ is a new symbol that does not appear in $\Sigma_1 \cup \Sigma_2$, we define a new $nc-eNCE$ graph grammar $ncG$ capable of generating the language $L = L_1 U L_2$ as follows.
 \vspace{-.3cm}

\[ncG=( \Sigma, \Delta, \Gamma, \Omega, P, S, R(P) )~~ \text{where}\],
\vspace{-1cm}
\begin{itemize}
\item []$\Sigma = \Sigma_1 \cup \Sigma_2\cup\{S\}$
\item[] $\Delta = \Delta_1 \cup \Delta_2$
\item[] $\Gamma = \Gamma_1 \cup \Gamma_2$
\item[] $\Omega =\Omega_1 \cup \Omega_2$
\item[] $P        = P_1 \cup P_2 \cup \{p_{01}, p_{02}\}$
\item[] R(P)= $[p_{01}.R_1(P_1)]+[p_{02}.R_2(P_2)]$
\end{itemize}
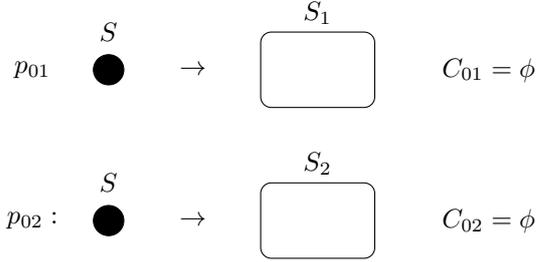
\begin{figure}[H]
    \centering
    \begin{tikzpicture}
     
\node at (0,2.5){$p_{01}$};
\draw [fill=black] (1,2.5)  circle (.2cm);
\node at (1,3){$S$};
\node at (2,2.5) {$~\rightarrow$};
\draw[rounded corners] (3, 2) rectangle (4.5, 3);   
\node at (3.75,3.25){$S_1$};
\node at (6,2.5) {$C_{01}= \phi $};
\node at (0,.5){$p_{02}:$};
\draw [fill=black] (1,.5)
circle (.2cm);
\node at (1,1){$S$};
\node at (2,.5) {$~\rightarrow$};
\draw[rounded corners] (3, 0) rectangle (4.5, 1); 
\node at (3.75,1.25){$S_2$};
\node at (6,.5) {$C_{02}= \phi $};
 
\end{tikzpicture}
\caption{New Production for generation of union of graph languages}
\label{f13}
\end{figure}

Here $S_1$ and $S_2$ are the start graphs of $ncG_1$ and $ncG_2$ respectively. Clearly any graph in $L_1$ can be generated using $[p_{01}.R_1(P_1)]$ while any graph in $L_2$ can be generated using $[p_{02}.R_2(P_2)]$.
\end{proof}

\begin{theorem}
The class of graph languages generated using $nc-eNCE$ graph grammars is closed under the graph operation disjoint sum.
\end{theorem}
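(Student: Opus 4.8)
The plan is to adapt the construction just used for union, with one essential difference: a single derivation of the new grammar must now run \emph{both} component grammars in succession, rather than committing to one of them. So first I would, without loss of generality, rename the non-terminal labels of $ncG_2$ so that $ncG_1$ and $ncG_2$ share only terminal labels, and put both grammars into the standard form in which no production rewrites a terminally labelled node; neither step changes the generated language. Then, choosing a fresh node label $S\notin\Sigma_1\cup\Sigma_2$, I define $ncG=(\Sigma,\Delta,\Gamma,\Omega,P,S,R(P))$ by $\Sigma=\Sigma_1\cup\Sigma_2\cup\{S\}$, $\Delta=\Delta_1\cup\Delta_2$, $\Gamma=\Gamma_1\cup\Gamma_2$, $\Omega=\Omega_1\cup\Omega_2$, $P=P_1\cup P_2\cup\{p_0\}$ with the single new production $p_0\colon S\to(S_1+S_2,\ \phi)$ whose daughter graph is the disjoint sum of the two start graphs and whose connection instruction set is empty, and $R(P)=p_0\cdot R_1(P_1)\cdot R_2(P_2)$, which is regular as a concatenation of regular sets.

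For the inclusion $L_1\oplus L_2\subseteq L(ncG)$: given $G_1\in L_1$ derived in $ncG_1$ along a control word $w_1\in R_1(P_1)$ and $G_2\in L_2$ derived in $ncG_2$ along $w_2\in R_2(P_2)$, I apply in $ncG$ the word $p_0w_1w_2$. The first step replaces the isolated start node $S$ by the disconnected graph $S_1+S_2$ (there is nothing to embed, as $S$ has no incident edges); the phase $w_1$ then replays the derivation of $G_1$ on the $S_1$-part and the phase $w_2$ replays the derivation of $G_2$ on the $S_2$-part. The point to be argued carefully is that the two phases do not interfere: the two parts start disconnected and, since every $nc\text{-}eNCE$ connection instruction only redirects edges incident to the mother node, no edge is ever created between them; and because the mother labels of $P_1$ are disjoint from every label that can occur in the $S_2$-part — and symmetrically — a rule demanded by $w_1$ can only fire inside the $S_1$-component and a rule demanded by $w_2$ only inside the $S_2$-component. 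Hence the result is exactly $G_1+G_2$.

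Conversely, for $L(ncG)\subseteq L_1\oplus L_2$: any successful derivation of $ncG$ must begin with $p_0$, since it is the only production whose mother label is $S$ and the start graph is the single node $S$; after it, the remaining control word is $w_1w_2$ with $w_i\in R_i(P_i)$. By the same non-interference observation, the $w_1$-phase carries out a complete $ncG_1$-derivation on the $S_1$-component, yielding some $G_1\in L_1$ and leaving the $S_2$-component untouched, and the $w_2$-phase carries out a complete $ncG_2$-derivation on the $S_2$-component, yielding some $G_2\in L_2$ and leaving the now fully terminal $G_1$-part untouched; the two parts stay disconnected throughout, so the derived terminal graph is $G_1+G_2\in L_1\oplus L_2$.

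The main obstacle is precisely the non-interference claim — ensuring that no $P_1$-rule accidentally applies to a node of the $S_2$-component during the $R_1(P_1)$-phase, and conversely. This is exactly the point the union proof could ignore, because there each derivation invoked only one of the two sub-grammars, whereas here both act within one derivation over a graph that carries both components at once. The preliminary relabelling of non-terminals together with the ``no rewriting of terminal nodes'' normal form is what makes the mother-label sets disjoint from the labels of the other component and thus confines each sub-derivation to its own part; the rest is the same bookkeeping as in the union case.
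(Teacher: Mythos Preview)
Your construction is exactly the paper's: a fresh start symbol $S$, a single new rule $p_{0}\colon S\to(S_1+S_2,\ \phi)$, and regular control $p_{0}\cdot R_1(P_1)\cdot R_2(P_2)$. The paper only assumes $P_1\cap P_2=\emptyset$ and does not discuss non-interference at all, so your preliminary relabelling of non-terminals and the ``no rewriting of terminals'' normal form are an added bit of rigour on top of the paper's argument rather than a departure from it.
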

\begin{proof}
Consider two  $nc-eNCE$ graph grammars $ncG_1 = (\Sigma_1, \Delta_1, \Gamma_1, \Omega_1, P_1, S_1, $ $ R_1(P_1) )$ and $ncG_2 = ( \Sigma_2, \Delta_2, \Gamma_2, \Omega_2, P_2, S_2, R_2(P_2) )$  generating graph languages $L_1$ and $L_2$ respectively. Without loss of generality we can assume that $P_1$ and $P_2$ have no elements in common.We define a new $nc-eNCE$ graph grammar $ncG$ capable of generating the language $L = L_1 \oplus L_2$ as follows.
\[ncG=( \Sigma, \Delta, \Gamma, \Omega, P, S, R(P) )~~ \text{where}\],
\vspace{-1cm}
\begin{itemize}
\item[] $\Sigma = \Sigma_1 \cup \Sigma_2\cup\{S\}$
\item[] $\Delta = \Delta_1 \cup \Delta_2$
\item[] $\Gamma = \Gamma_1 \cup \Gamma_2$
\item[] $\Omega =\Omega_1 \cup \Omega_2$
\item[] $P        = P_1 \cup P_2 \cup \{p_{01}\}$
\item[] $R(P)= p_{01}.R_1(P_1).R_2(P_2)$
\end{itemize}
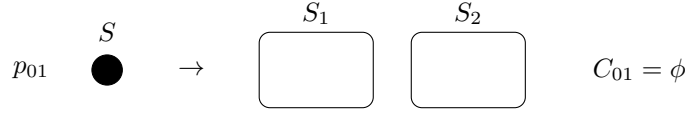
\begin{figure}[H]
    \centering
\begin{tikzpicture}
     
\node at (0,2.5){$p_{01}$};
\draw [fill=black] (1,2.5)  circle (.2cm);
\node at (1,3){$S$};
\node at (2,2.5) {$~\rightarrow$};
\draw[rounded corners] (3, 2) rectangle (4.5, 3);   
\node at (3.75,3.25){$S_1$};
\draw[rounded corners] (5, 2) rectangle (6.5,3);   
\node at (5.75,3.25){$S_2$};
\node at (8,2.5) {$C_{01}= \phi $};
\end{tikzpicture}
    \caption{Additional Production for generation of disjoint Sum}  
\label{pro_dis_sum}
\end{figure}

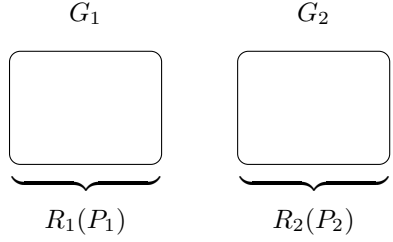
\begin{figure}[H]
    \centering
\begin{tikzpicture}
     \draw[rounded corners] (0, 0) rectangle (2, 1.5);  
     \draw[rounded corners] (3, 0) rectangle (5, 1.5);
\node at (1.,2){$G_1$};
\node at (4,2){$G_2$};
\node at (1.,-.25){$\underbrace{\kern5.5em } $};
\node at (4,-.25){$\underbrace{\kern5.5em } $};
\node at (1.,-.75){$R_1(P_1)$};
\node at (4,-.75){$R_2(P_2)$};
    \end{tikzpicture}
    \caption{ Schemata of Disjoint sum with graphs $G_1 \& G_2$ from $L_1$ and $L_2$ respectively.}
    \label{scheme_disjsum}
\end{figure}

The start graphs $S_1$ and $S_2$ of $ncG_1$ and $ncG_2$ are initially generated using $p_{01}$. This is followed by $R_1(P_1)$ generating $G_1 \in L_1$ and $R_2(P_2)$ generating $G_2 \in L_2$. Hence it is clear that $R(P)$ generates $L_1 \oplus L_2$.
\end{proof}
\begin{theorem}
The class of unlabelled graph languages generated using $nc-eNCE$ graph grammars is closed under the graph operation Kleene sum.
\end{theorem}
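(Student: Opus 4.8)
I would prove this by mimicking the disjoint-sum construction, but re-using a single grammar and forcing its copies to be produced \emph{one at a time}. Start from an $nc-eNCE$ grammar $ncG=(\Sigma,\Delta,\Gamma,\Omega,P,G_S,R(P))$ whose underlying language $U(L(ncG))$ is the given unlabelled language. Pick a fresh symbol $S\notin\Sigma$ and add two productions: $p_1:S\rightarrow(D_1,\phi)$, where $D_1$ is the disjoint union of a copy of $G_S$ and a single isolated node again labelled $S$; and $p_2:S\rightarrow(G_S,\phi)$. Put $\Sigma'=\Sigma\cup\{S\}$, $P'=P\cup\{p_1,p_2\}$, keep $\Delta,\Gamma,\Omega$, take the one-node graph $S$ as start graph, and set the regular control to $R'(P')=(p_1\cdot R(P))^{*}\cdot p_2\cdot R(P)$, which is regular since it is built from $R(P)$ by concatenation and Kleene star. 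Then I would check that $U(L(ncG'))=(U(L(ncG)))^{\oplus}$, using that the disjoint sum commutes with forgetting labels.

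For the inclusion $L^{\oplus}\subseteq L(ncG')$, given $G_1,\dots,G_n\in L$ with $ncG$-derivations governed by words $w_1,\dots,w_n\in R(P)$, I would run $ncG'$ on the word $(p_1w_1)(p_1w_2)\cdots(p_1w_{n-1})\,p_2\,w_n$. The invariant to verify, phase by phase, is that right after the $i$-th occurrence of $p_1$ the host graph is a disjoint union of already-terminal components together with one fresh copy of $G_S$ and one isolated node labelled $S$; since that bookkeeping node carries the label $S\notin\Sigma$, and every new component is created disconnected from the rest (all connection sets are $\phi$), the following block $w_i$ acts exactly as in $ncG$ on the lone new copy and turns it into $G_i$; $p_2$ then plants the last copy and $w_n$ finishes it, yielding $G_1+\cdots+G_n$.

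For the reverse inclusion I would exploit the same invariant: at the start of every $R(P)$-block there is at most one component containing nonterminals, so each maximal $R(P)$-block of the governing control word is a genuine $ncG$-derivation acting on a single copy of $G_S$, and, since productions of $P$ attach a daughter graph only to former neighbours of the rewritten node, distinct components are never fused. Hence any graph produced by $ncG'$ is a disjoint sum of graphs of $L$, i.e. lies in $L^{\oplus}$; the unlabelled hypothesis makes this bookkeeping painless, since only the graph structure, not the concrete terminal labels of the individual copies, has to be tracked.

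The delicate point — and the reason the control interleaves $p_1$ with $R(P)$ instead of being the more obvious $p_1^{*}\cdot p_2\cdot R(P)^{*}$ — is maintaining ``one live copy at a time'' while keeping the control regular. With several copies of $G_S$ present at once, a single word of $R(P)$ could rewrite nonterminals drawn from different copies and build a chimeric component outside $L$; and demanding $n$ copies followed by exactly $n$ blocks of $R(P)$ is a non-regular counting condition. Interleaving fixes both. The remaining subtlety is that one word of $R(P)$ need not, on its own, terminalise its copy, which could let a later block spill over onto it; I would dispose of this by assuming, without loss of generality, a grammar for the language in which every control word yields a graph over $\Delta$ (equivalently, restricting $R(P)$ to its successful words), after which the invariant, and with it soundness, goes through.
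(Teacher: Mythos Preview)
Your proposal is correct and follows essentially the same construction as the paper: introduce a fresh start symbol $S$, a production replacing $S$ by the disjoint union of the original start graph and a new isolated $S$-node, and interleave this with the original control via $(p_1\cdot R(P))^{*}$. The only cosmetic difference is that the paper terminates with $p_{02}:S\to\varepsilon$ rather than your $p_2:S\to G_S$ followed by one last $R(P)$-block; your discussion of the reverse inclusion and of why the interleaved control is needed is in fact more careful than the paper's one-line justification.
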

\begin{proof}
Consider an  $nc-eNCE$ graph grammar $ncG_1 = (\Sigma_1, \Delta_1, \Gamma_1, \Omega_1, P_1, S_1, $ $ R_1(P_1) )$   generating the graph language $L_1$. We define a new $nc-eNCE$ graph grammar $ncG$ capable of generating the language $L= L_1^\oplus$ as follows.
\[ncG=( \Sigma, \Delta, \Gamma, \Omega, P, S, R(P) )~~ \text{where}\],
\vspace{-1cm}
\begin{itemize}
\item[] $\Sigma = \Sigma_1 \cup\{S\}$
\item[] $\Delta = \Delta_1$
\item[] $\Gamma = \Gamma_1$
\item[] $\Omega =\Omega_1$
\item[] $P        = P_1 \cup \{p_{01},p_{02}\}$
\item[] $R(P)= (p_{01}.R_1(P_1))^*.p_{02}$
\end{itemize}

\begin{figure}[H]
    \centering

\begin{tikzpicture}
     
\node at (0,2.5){$p_{01}$};
\draw [fill=black] (1,2.5)  circle (.2cm);
\node at (1,3){$S$};
\node at (2,2.5) {$~\rightarrow$};
\draw[rounded corners] (3, 2) rectangle (4.5, 3);   
\node at (3.75,3.25){$S_1$};
\draw [fill=black] (6,2.5)  circle (.2cm);
\node at (6,3){$S$};
\node at (8,2.5) {$C_{01}= \phi $};
\node at (0,.5){$p_{02}:$};
\draw [fill=black] (1,.5)
circle (.2cm);
\node at (1,1){$S$};
\node at (2,.5) {$~\rightarrow$};
\node at (3.5, 0.5){$\varepsilon$}; 
\node at (8,.5) {$C_{02}= \phi $};
\end{tikzpicture}\caption{Additional productions for generation of Kleene sum}  
\label{pro_kleen_sum}
\end{figure}
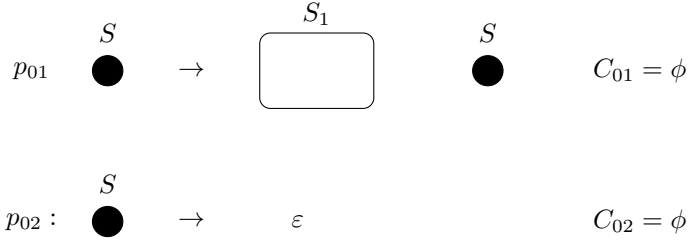

\begin{figure}[H]
    \centering
     \begin{tikzpicture}
     \draw[rounded corners] (0, 0) rectangle (1.5, 1.);    
     \draw[rounded corners] (3, 0) rectangle (5, 1.);
     \draw[rounded corners] (8, 0) rectangle (10,1);
\node at (1,1.75){$G_1$};
\node at (4,1.75){$G_2$};
\node at (6.5,.75){$\cdots$};
\node at (9,1.75){$G_n$};
    \end{tikzpicture}

    \caption{ Schemata of Kleene sum with graphs $G_1, G_2 \cdots G_n$ from $L_1$.}
    \label{Schem_kleensum}
\end{figure}
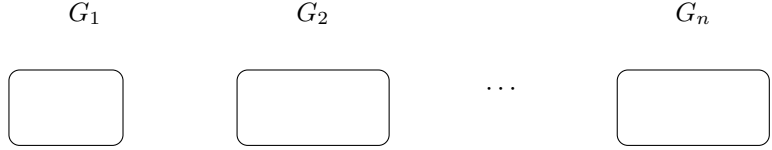

\par Initially the production $p_{01}$ replaces the start node $S$ of $ncG$ with a copy of the start graph $S_1$ of $ncG_1$ and the node $S$ itself. Now $R_1(P_1)$ replaces $S_1$ by a graph $G_1 \in L_1$. The rules $p_{01}$ and $R_1(P_1)$ are recursively applied to further generate new components of the resultant graph. Finally the derivation terminates when $p_{02}$ is applied. Hence it is clear that $R(P)$ generates $L_1^ \oplus$.
\end{proof}

\begin{theorem}
The class of underlying graph languages formed from graph classes generated   using $nc-eNCE$ jumping graph grammars is closed under the graph operation  chain concatenation.\end{theorem}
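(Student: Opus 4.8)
\emph{Proof strategy.} The plan is to adapt the disjoint-sum construction of the previous theorem and insert, at the very end of the controlled derivation, one production whose connection instruction is of the \emph{jumping} type, so that it can create the single ``chain edge'' between the two designated nodes. First I would normalise the two given grammars: relabel so that $ncG_1$ and $ncG_2$ share no node label, no edge label and no production name, and (since we only care about the underlying language) replace in each $ncG_i$ the production that terminalises the designated node by one that instead freezes there a fresh, \emph{unique} nonterminal $A_i$. Thus every graph derivable by the modified grammar $ncG_i'$ equals some $G_i\in L_i$ except that its designated node still carries the special label $A_i$, and $A_i$ occurs exactly once. One checks this modification alters neither $U(L_1)$ nor $U(L_2)$, since $A_i$ is just a placeholder that will eventually be terminalised.

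Then I would take $ncG=(\Sigma,\Delta,\Gamma,\Omega,P,S,R(P))$ with $\Sigma=\Sigma_1\cup\Sigma_2\cup\{S\}$, $\Gamma=\Gamma_1\cup\Gamma_2\cup\{\gamma\}$ (and $\Delta,\Omega$ the corresponding unions together with the labels $a_1,a_2,\gamma$ used below), $P=P_1'\cup P_2'\cup\{p_{01},p_f,p_g\}$, and
\[ R(P)=p_{01}.R_1(P_1').R_2(P_2').p_f.p_g, \]
where $p_{01}\colon S\to(S_1\ S_2,\ \phi)$ places the two start graphs side by side with no connection instructions (exactly as in Fig.~\ref{pro_dis_sum}); $p_f\colon A_1\to(a_1,\ C_f)$ with $C_f=\{(a_1,\gamma,A_2)\}$ terminalises the designated node of the first component and, via the jumping connection instruction of type $(a,\alpha,b)$, adds an edge labelled $\gamma$ between $a_1$ and the node labelled $A_2$ found anywhere in the remaining host graph; and $p_g\colon A_2\to(a_2,\ \phi)$ finally terminalises the designated node of the second component.

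For correctness I would argue as follows. Applying $p_{01}$ yields the disjoint graph $S_1\cup S_2$. Because $P_1'$ rewrites only $\Sigma_1$-nonterminals, $P_2'$ only $\Sigma_2$-nonterminals, the label sets of the two grammars are disjoint, and the two parts start disconnected, $R_1(P_1')$ can act only on the $S_1$-part and $R_2(P_2')$ only on the $S_2$-part; moreover no (jumping or ordinary) connection instruction of $P_1'$ or $P_2'$ can reach across, since the labels it names do not occur in the other part. Hence the derivation produces $G_1'\cup G_2'$, where $G_i'$ is some $G_i\in L_i$ with its designated node relabelled $A_i$. At that moment $A_1$ and $A_2$ each occur exactly once, so $p_f$ has a unique match and its jump adds exactly one $\gamma$-edge between the two designated nodes, after which $p_g$ terminalises the second one. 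Reading off underlying graphs, the set generated by $ncG$ is precisely $U(L_1)\odot U(L_2)$; conversely every such graph is obtained in this way by choosing suitable derivations in $ncG_1'$ and $ncG_2'$. Since the chain edge is placed by a jumping connection instruction, $ncG$ is an $nc$-$eNCE$ jumping graph grammar, which is what the statement asserts.

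The main obstacle is precisely the joining step. Ordinary connection instructions $(a,p\mid q,B)$ can only re-wire former neighbours of the mother node, whereas the designated nodes of $G_1$ and $G_2$ are in general arbitrarily far from any node rewritten at the end of the derivation, so a genuine jumping instruction $(a,\alpha,b)$ that can reach an arbitrary node identified only by its label is indispensable; this is why the theorem is stated for jumping $nc$-$eNCE$ grammars rather than for the plain ones. The delicate point buried inside this step is guaranteeing that the target label $A_2$ of the jump is unambiguous — present exactly once — at the instant $p_f$ fires, and symmetrically for $A_1$; this is exactly what forces the preliminary normalisation that keeps each designated node as a unique frozen nonterminal until its dedicated terminalising production, and what makes it necessary to verify that the regular control $R(P)$ really prevents $p_f$ from being applied before both $R_1(P_1')$ and $R_2(P_2')$ have run to completion.
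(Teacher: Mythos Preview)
Your construction is correct and follows the same high-level strategy as the paper: start with the disjoint-sum production $p_{01}\colon S\to S_1\,S_2$, run the two controlled derivations in sequence, and use a single jumping connection instruction to draw the chain edge between two specially marked nodes, one in each component. The paper differs only in bookkeeping. Instead of freezing fresh \emph{nonterminals} $A_1,A_2$ and terminalising them with trailing productions $p_f,p_g$, the paper rewrites each regular control $R_i(P_i)$ into a $+$-free sum $R_{i1}+\cdots+R_{in_i}$, locates in every summand the last production that emits a terminal node, and clones that production with the terminal relabelled to a fresh \emph{terminal} $x$ (resp.\ $y$); the jumping instruction $(y,\alpha,x)$ is then attached directly to the cloned production on the $L_2$ side, so no extra trailing rules are needed and $R(P)=p_{01}R'_1(P'_1)R'_2(P'_2)$. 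Your variant is arguably cleaner, since appending $p_f.p_g$ avoids the regex surgery; the paper's variant avoids introducing auxiliary nonterminals and the two extra rules. Both rest on the same delicate point you explicitly flag---that the ``designated node'' can be singled out at the grammar level so that its marker label occurs exactly once when the jump fires---and neither proof makes this step fully rigorous.
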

\begin{proof}
 
 Consider two  $nc-eNCE$ graph grammars $ncG_1 = (\Sigma_1, \Delta_1, \Gamma_1, \Omega_1, P_1, S_1, $ $ R_1(P_1) )$ and $ncG_2 = ( \Sigma_2, \Delta_2, \Gamma_2, \Omega_2, P_2, S_2, R_2(P_2) )$  generating graph languages $L_1$ and $L_2$ respectively. Without loss of generality we can assume that $P_1$ and $P_2$ have no elements in common. We begin our construction of the required grammar by  carrying out the following steps.
 \begin{enumerate}
     \item Replace each $p_j^*$ in the regular expressions $ R_i(P_i)$ by the equivalent expression $(p_j^*p_j + \varepsilon)$   and simplifying so that $ R_i(P_i)$ is now of the form $R_{i1}(P_i) +R_{i2}(P_i) +\cdots R_{in_i}(P_i)$. This ensures that each $R_{ik}(P_i)$ does not contain a the operator '$+$'.
     \item \label{new} Initialize ${P'}_1= P_1$.  For each $R_{1k}(P_1) ,1\leq k\leq n_1$,  do the following
     \begin{enumerate}
         \item Since every graph generated by  $R_{1k}(P_1)$ is  non null  it contains atleast one vertex with a terminal label. Identify the last production rule $p_{1l}$ in the series  $R_{1k}(P_1)$ which has atleast one node labelled with a terminal symbol, say '$a$' on the right hand side. 
         \item Create a copy of $p_{1l}$  in which one occurrence of '$a$' is replaced by a new terminal symbol '$x$', $x \notin (\Delta_1 \cup \Delta_2)$ and label this production ${p'}_{1l}$. The connection instructions corresponding to ${p'}_{1l}$ includes all the instructions in ${p}_{1l}$. In addition every connection instruction which contains '$a$' is duplicated and '$x$' is inserted instead of '$a$'. Update ${P'}_1$ as ${P'}_1 = {P'}_1\cup {p'}_{1l}$. 
                        \end{enumerate}
 \item   Repeat Step \ref{new} for $R_{2k}(P_2),1\leq k\leq n_2$ with the terminal symbol '$b$' on the right hand side of $p_{1l}$ replaced by $y$, $y \notin (\Delta_1 \cup \Delta_2)$ label this production ${p'}_{1l}$. The connection instructions corresponding to ${p'}_{1l}$ includes all the instructions in ${p}_{1l}$. In addition every connection instruction which contains '$b$' is duplicated and '$y$' is inserted instead of '$b$'.One more jumping connection instruction of the form $(y,\alpha,x)$ is added to this set. Update ${P'}_2$ as ${P'}_2 = {P'}_2\cup {p'}_{1l}$ .
 \item  Replace the last occurrence  of $p_{il}$ with ${p'}_{il}$ in each $R_{ik}(P_i)$ to obtain ${R'}_{ik}(P_i)$. The new regular control is given by 
 \[{R'}_i({P'}_i)={R'}_{i1}({P'}_i) +{R'}_{i2}({P'}_i) +\cdots {R'}_{in_i}({P'}_i)\]
 \end{enumerate}.
 
We  now define a new $nc-eNCE$ graph grammar $ncG$ capable of generating the language $L$ such that $ U(L)=U(L_1 \odot L_2)$ as follows.
\[ncG=( \Sigma, \Delta, \Gamma, \Omega, P, S, R(P) )~~ \text{where}\],
\vspace{-1cm}
\begin{itemize}
\item[] $\Sigma = \Sigma_1 \cup \Sigma_2 \cup \{S,x,y\}$
\item[] $\Delta = \Delta_1 \cup \Delta_2 \cup \{x,y\}$
\item[] $\Gamma = \Gamma_1 \cup \Gamma_2$
\item[] $\Omega =\Omega_1 \cup \Omega_2$
\item[] $P        = {P'}_1 \cup {P'}_2 \cup \{p_{01}\}$
\item[] $R(P)=p_{01}{R'}_1({P'}_1){R'}_2({P'}_2)$
\end{itemize}
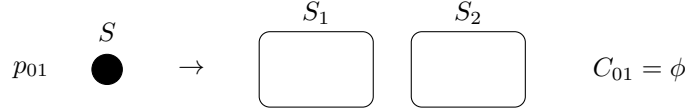
\begin{figure}[H]
    \centering
    \begin{tikzpicture}
     
\node at (0,2.5){$p_{01}$};
\draw [fill=black] (1,2.5)  circle (.2cm);
\node at (1,3){$S$};
\node at (2,2.5) {$~\rightarrow$};
\draw[rounded corners] (3, 2) rectangle (4.5, 3);   
\node at (3.75,3.25){$S_1$};
\draw[rounded corners] (5, 2) rectangle (6.5,3);   
\node at (5.75,3.25){$S_2$};
\node at (8,2.5) {$C_{01}= \phi $};
\end{tikzpicture}
\caption{New Production for generation of base graph for chain concatenation.}  
\label{f16}
\end{figure}

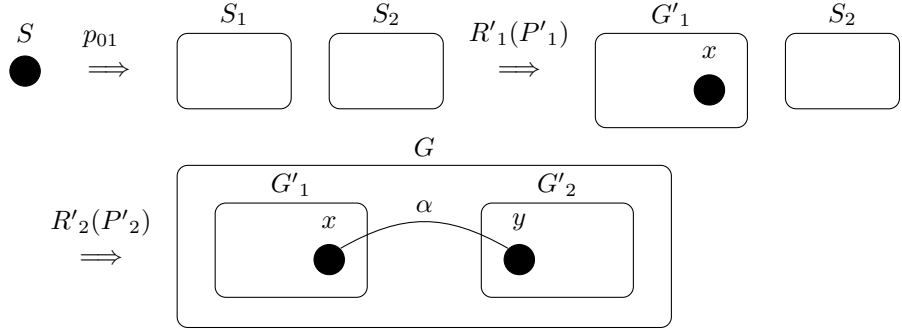
\begin{figure}[H]
    \centering
 \begin{tikzpicture}
     
\draw [fill=black] (1,2.5)  circle (.2cm);
\node at (1,3){$S$};
\node at (2,2.5) [label=above:$p_{01}$]{$~\Longrightarrow$};
\draw[rounded corners] (3, 2) rectangle (4.5, 3);   
\node at (3.75,3.25){$S_1$};
\draw[rounded corners] (5, 2) rectangle (6.5,3);   
\node at (5.75,3.25){$S_2$};
\node at (7.5,2.5)[label=above:${R'}_1({P'}_1)$] {$\Longrightarrow$};
\draw[rounded corners] (8.5, 1.75) rectangle (10.5, 3);
\node at (10,2.75){$x$};
\draw (10,2.25)[fill=black] circle  (.2 cm);
\node at (9.5,3.25){${G'}_1$};
\draw[rounded corners] (11, 2) rectangle (12.5, 3);
\node at (11.75,3.25){$S_2$};
\node at (2,0) [label=above:${R'}_2({P'}_2)$] {$\Longrightarrow$};
\draw[rounded corners] (3., 1.25) rectangle (9.5, -.9);
\node at (6.25,1.5){$G$};
\draw[rounded corners] (3.5, .75) rectangle (5.5, -.5);
\node at (4.5,1){${G'}_1$};
\node at (5,.5){$x$};
\draw (5,0) [fill=black]circle  (.2 cm);
\draw[rounded corners] (7, .75) rectangle (9, -.5);
\node at (8,1){${G'}_2$};
\node at (7.5,.5){$y$};
\draw (7.5,0) [fill=black]circle  (.2 cm);
\path((5.15,.15) edge [bend left, looseness=1.1] node[above,midway]{$\alpha$} (7.35,.15);
\end{tikzpicture}
    \caption{Schemata of chain concatenation of graphs from $L_1$ and $L_2 $using $ncG$}
    \label{f17}
\end{figure}
It is clear that the underlying graph obtained after  chain concatenation of graphs $G_1$  and $G_2$  is  identical to the the underlying graph $G$  which is  the result of this construction.
\end{proof}
\begin{theorem}
The class of underlying graph languages formed from graph classes generated   using $nc-eNCE$ jumping graph grammars is closed under the graph operation  star concatenation.\end{theorem}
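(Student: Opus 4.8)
The plan is to fuse the construction used for closure under Kleene sum with the vertex‑marking and jumping‑edge device from the proof of closure under chain concatenation. Let $ncG_1 = (\Sigma_1,\Delta_1,\Gamma_1,\Omega_1,P_1,S_1,R_1(P_1))$ generate $L_1$, and assume (as for chain concatenation) that $L_1$ contains no null graph. First I would preprocess $R_1(P_1)$ exactly as in Steps~1--2 of the chain‑concatenation proof: rewrite it as a $+$‑free sum $R_{11}(P_1)+\cdots+R_{1n_1}(P_1)$; in each branch identify the last production $p_{1l}$ whose right‑hand side carries a node with a terminal label $a$; form a copy ${p'}_{1l}$ in which one occurrence of $a$ is replaced by a fresh terminal symbol $x\notin\Delta_1$, carrying over all connection instructions of $p_{1l}$ together with a copy of each instruction mentioning $a$ with $x$ substituted; and \emph{additionally} attach to ${p'}_{1l}$ one jumping connection instruction $(x,\alpha,c)$, where $c$ is a new terminal symbol reserved for the central node and $\alpha$ is a new (terminal) edge label. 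Replacing the appropriate occurrence of $p_{1l}$ by ${p'}_{1l}$ in every branch produces ${R'}_1({P'}_1)$; every graph it derives is a copy of a graph of $L_1$ in which one distinguished vertex has been relabelled $x$ and joined, via an $\alpha$‑edge, to whatever $c$‑labelled vertex is present in the host.

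Next I would define $ncG=(\Sigma,\Delta,\Gamma,\Omega,P,S,R(P))$ by $\Sigma=\Sigma_1\cup\{S,S_0,x,c\}$, $\Delta=\Delta_1\cup\{x,c\}$, $\Gamma=\Gamma_1\cup\{\alpha\}$, $\Omega=\Omega_1\cup\{\alpha\}$, $P={P'}_1\cup\{p_{00},p_{01},p_{02}\}$, start graph the single node $S$, and regular control $R(P)=p_{00}.(p_{01}.{R'}_1({P'}_1))^{*}.p_{02}$, where $p_{00}\colon S\to(c+S_0,\phi)$ creates the isolated central vertex (labelled $c$) together with a new nonterminal $S_0$; $p_{01}\colon S_0\to(S_1+S_0,\phi)$ spawns a fresh copy of the axiom of $ncG_1$ while retaining $S_0$; and $p_{02}\colon S_0\to(\varepsilon,\phi)$ erases $S_0$ to stop. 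Applying $p_{00}$ once fixes the unique $c$‑labelled vertex (and $S_0$ stays isolated throughout); each round $p_{01}.{R'}_1({P'}_1)$ then grows one component $G'_i\cong G_i\in L_1$ whose distinguished vertex, now labelled $x$, is connected by $(x,\alpha,c)$ to that central vertex; finitely many rounds are followed by $p_{02}$. I would accompany this with a figure of the new productions and a schema of the resulting star‑shaped graph, mirroring Figures~\ref{pro_kleen_sum} and \ref{f17}.

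Correctness then follows in the usual two directions. \emph{Soundness:} a word accepted by $R(P)$ has the form $p_{00}(p_{01}w_1)\cdots(p_{01}w_n)p_{02}$ with each $w_i$ accepted by ${R'}_1({P'}_1)$; since ${R'}_1$ generates the same graphs as $R_1$ up to the single relabelling $a\mapsto x$, and since at every stage the instruction $(x,\alpha,c)$ has exactly one admissible target (the vertex produced by $p_{00}$), the derived graph is the disjoint sum $G_1+\cdots+G_n$ with one extra edge from a distinguished vertex of each $G_i$ to $c$; forgetting labels, this is exactly a member of $U(L_1^{\circledast})$. \emph{Completeness:} any element of $L_1^{\circledast}$ is the join of some sample graphs $G_1,\dots,G_n\in L_1$ to a central vertex, and is obtained from $ncG$ by choosing the word $p_{00}(p_{01}w_1)\cdots(p_{01}w_n)p_{02}$ with $w_i$ deriving $G_i$. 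Hence $U(L)=U(L_1^{\circledast})$, as required.

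The step I expect to be the main obstacle is verifying that the jumping instruction $(x,\alpha,c)$ behaves correctly throughout an unbounded derivation: one must check that $c$ occurs nowhere else, so the target really is unique; that vertices marked in earlier rounds --- which keep the terminal label $x$ --- are never re‑wired when a later copy is grown (they lie in the host but not in the current daughter graph, so form‑2 connection instructions cannot touch them); and, as in the chain‑concatenation proof, that the rewriting of $R_1(P_1)$ forces precisely one vertex per copy to be relabelled $x$ even when the branches $R_{1k}(P_1)$ still contain Kleene stars. Beyond this bookkeeping the argument is a routine amalgamation of the Kleene‑sum and chain‑concatenation constructions.
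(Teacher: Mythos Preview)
Your construction is correct and shares the paper's preprocessing (rewriting $R_1(P_1)$ and introducing the marker $x$ via the modified productions ${p'}_{1l}$), but the star is assembled in the opposite order. The paper keeps the nonterminal $S$ itself as the iterator, with $p_{01}:S\to S_1+S$ and $R(P)=(p_{01}.{R'}_1({P'}_1))^*.p_{02}$; only at the very end does $p_{02}:S\to c$ create the central vertex, equipped with the single jumping instruction $C_{02}=\{(c,\alpha,x)\}$, which in one step wires $c$ to \emph{all} $x$-labelled vertices accumulated in the host. You instead introduce $c$ up front via $p_{00}$, carry the iteration on a separate nonterminal $S_0$, and attach $(x,\alpha,c)$ to each ${p'}_{1l}$ so that every new component is hooked to the centre as it is born. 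Your route costs one extra production and nonterminal and forces you to argue that earlier $x$-vertices are not re-wired (which you do correctly: they sit in the host, not the daughter, so form-2 instructions ignore them as sources), but it has the pleasant feature that every jumping instruction has a unique target; the paper's route is shorter but relies on the semantics that $(c,\alpha,x)$ connects $c$ to every $x$-node present, and its ${P'}_1$ need not carry any jumping instruction at all. Both are sound routes to $U(L)=U(L_1^{\circledast})$.
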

\begin{proof}
Consider an  $nc-eNCE$ graph grammar $ncG_1 = (\Sigma_1, \Delta_1, \Gamma_1, \Omega_1, P_1, S_1, $ $ R_1(P_1) )$   generating the graph language $L_1$. We define a new $nc-eNCE$ graph grammar $ncG$ capable of generating the language $L= L_1^\circledast$ as follows.
\[ncG=( \Sigma, \Delta, \Gamma, \Omega, P, S, R(P) )~~ \text{where}\],
\vspace{-1cm}
\begin{itemize}
\item[] $\Sigma = \Sigma_1 \cup\{S,x,c\}$
\item[] $\Delta = \Delta_1 \cup\{x,c\}$
\item[] $\Gamma = \Gamma_1$
\item[] $\Omega =\Omega_1$
\item[] $P = {P'}_1 \cup \{p_{01},p_{02}\}$
\item[] $R(P)= (p_{01}.{R'}_1({P'}_1))^*.p_{02}$
\end{itemize}

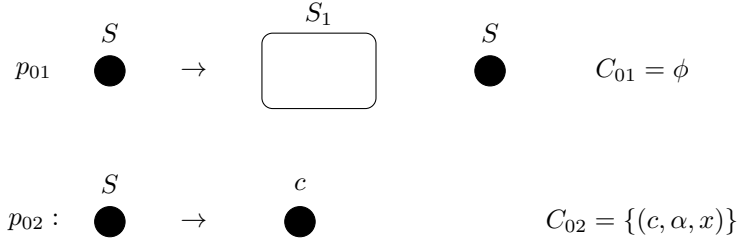
\begin{figure}[H]
    \centering
  \begin{tikzpicture}
     
\node at (0,2.5){$p_{01}$};
\draw [fill=black] (1,2.5)  circle (.2cm);
\node at (1,3){$S$};
\node at (2,2.5) {$~\rightarrow$};
\draw[rounded corners] (3, 2) rectangle (4.5, 3);   
\node at (3.75,3.25){$S_1$};
\draw [fill=black] (6,2.5)  circle (.2cm);
\node at (6,3){$S$};
\node at (8,2.5) {$C_{01}= \phi $};
\node at (0,.5){$p_{02}:$};
\draw [fill=black] (1,.5)
circle (.2cm);
\node at (1,1){$S$};
\node at (2,.5) {$~\rightarrow$};
\draw [fill=black] (3.5, 0.5)
circle (.2cm);
\node at (3.5, 1){$c$}; 
\node at (8,.5) {$C_{02}= \{(c,\alpha,x)\} $};
\end{tikzpicture}
\caption{Additional Productions for generation of star concatenation}  
\label{pro_star_cat}
\end{figure}
Initially the production $p_{01}$ replaces the start node $S$ of $ncG$ with a copy of the start graph $S_1$ of $ncG_1$ and the node $S$ itself. Now ${R'}_1({P'}_1)$ replaces $S_1$ by a graph $G_1 \in L_1$. The rules $p_{01}$ and ${R'}_1({P'}_1)$ are recursively applied to further generate new components of the resultant graph. Finally the derivation terminates when $p_{02}$ is applied. Hence it is clear that $R(P)$ generates $L_1^ \circledast$.
\end{proof}
\begin{figure}[H]
    \centering
\begin{tikzpicture}
\draw [fill=black] (1,2.5)  circle (.2cm);
\node at (1,3){$S$};
\node at (1.7,2.5) [label=above:$p_{01}$]{$~\Longrightarrow$};
\draw[rounded corners] (2.5, 2) rectangle (4, 3);   
\node at (3.25,3.25){$S_1$};
\draw [fill=black] (4.5,2.5)  circle (.2cm);
\node at (4.7,3){$S$};  
\node at (5.75,2.5)[label=above:${R'}_1({P'}_1)$]  {$~\Longrightarrow$};
\draw[rounded corners] (6.5, 2) rectangle (8, 3.25);   
\draw [fill=black] (9,2.5)  circle (.2cm);
\node at (7.5,3.5){${G'}_1$};
\node at (7.5,3){$x$};
\draw (7.5,2.5) [fill=black]circle  (.2 cm);

\node at (9,3){$S$};  
\draw[rounded corners] (2.5, .75) rectangle (4.5, -.5);
\node at (3.5,1){${G'}_1$};
\node at (3.5,.5){$x$};
\draw (3.5,0) [fill=black]circle  (.2 cm);
\draw[rounded corners] (5, .75) rectangle (7, -.5);
\node at (6,1){${G'}_2$};
\node at (6,.5){$x$};
\draw (6,0) [fill=black]circle  (.2 cm);
\draw[rounded corners] (7.5, .75) rectangle (9.5, -.5);
\node at (8.5,1){${G'}_3$};
\node at (8.5,.5){$x$};
\draw (8.5,0) [fill=black]circle  (.2 cm);
\node at (.5,-1.5) [label=above:$(p_{01}.{R'}_1({P'}_1))^{n-1}$] {$\Longrightarrow$};
\node at (2.15,-1.5){${G'}_4$};
\draw[rounded corners] (2.5, -.75) rectangle (4.5, -2);
\node at (3.5,1){${G'}_1$};
\node at (3.5,-1){$x$};
\draw (3.5,-1.5) [fill=black]circle  (.2 cm);
\node at (6,-1){$S$};
\draw[rounded corners] (2.5, -2.25) rectangle (4.5, -3.5);
\node at (3.5,-2.5){$x$};
\node at (3.5,-3.75){${G'}_6$};
\draw (3.5,-3) [fill=black]circle  (.2 cm);
\node at (6,-3){$\cdots$};
\draw[rounded corners] (7.5, -2.25) rectangle (9.5, -3.5);
\node at (8.5,-2.5){$x$};
\draw (8.5,-3) [fill=black]circle  (.2 cm);
\node at (8.5,-3.75){${G'}_n$};
\draw[rounded corners] (7.5, -.75) rectangle (9.5, -2);
\node at (8.5,-1){$x$};
\draw (8.5,-1.5) [fill=black]circle  (.2 cm);
\node at (9.85,-1.5){${G'}_5$};
\draw (6,-1.5) [fill=black]circle  (.2 cm);
\end{tikzpicture}
\begin{tikzpicture}
  \node at (6.25,1.5){$G$};
\draw[rounded corners] (1.75, 1.25) rectangle (10.25, -4);
\draw[rounded corners] (2.5, .75) rectangle (4.5, -.5);
\node at (3.5,1){${G'}_1$};
\node at (3.5,.5){$x$};
\draw (3.5,0) [fill=black]circle  (.2 cm);
\draw[rounded corners] (5, .75) rectangle (7, -.5);
\node at (6,1){${G'}_2$};
\node at (6,.5){$x$};
\draw (6,0) [fill=black]circle  (.2 cm);
\draw[rounded corners] (7.5, .75) rectangle (9.5, -.5);
\node at (8.5,1){${G'}_3$};
\node at (8.5,.5){$x$};
\draw (8.5,0) [fill=black]circle  (.2 cm);
\node at (-.25,-1.5) [label=above:$p_{02}$] {$\Longrightarrow$};
\node at (2.15,-1.5){${G'}_4$};
\draw[rounded corners] (2.5, -.75) rectangle (4.5, -2);
\node at (3.5,1){${G'}_1$};
\node at (3.5,-1){$x$};
\draw (3.5,-1.5) [fill=black]circle  (.2 cm);
\draw[rounded corners] (2.5, -2.25) rectangle (4.5, -3.5);
\node at (3.5,-2.5){$x$};
\node at (3.5,-3.75){${G'}_6$};
\draw (3.5,-3) [fill=black]circle  (.2 cm);
\node at (6,-3){$\cdots$};
\node at (6,-2){$c$};
\draw[rounded corners] (7.5, -2.25) rectangle (9.5, -3.5);
\node at (8.5,-2.5){$x$};
\draw (8.5,-3) [fill=black]circle  (.2 cm);
\node at (8.5,-3.75){${G'}_n$};
\draw[rounded corners] (7.5, -.75) rectangle (9.5, -2);
\node at (8.5,-1){$x$};
\draw (8.5,-1.5) [fill=black]circle  (.2 cm);
\node at (9.85,-1.5){${G'}_5$};
\draw (6,-1.5) [fill=black]circle  (.2 cm); 
\draw[thick] (3.7,-1.5)  --(5.8,-1.5);
\draw[thick] (6.2,-1.5)  --(8.3,-1.5);
\draw[thick] (6,-1.5)  --(6,0);
\draw[thick] (6,-1.5)  --(8.5,0);
\draw[thick] (6,-1.5)  --(3.5,0);
\draw[thick] (6,-1.5)  --(3.5,-3);
\draw[thick] (6,-1.5)  --(8.5,-3);
\end{tikzpicture}

    \caption{ Schemata of star concatenation with graphs $G_1, G_2 \cdots G_n$ from $L_1$.}
    \label{Sche_starcat}
\end{figure}
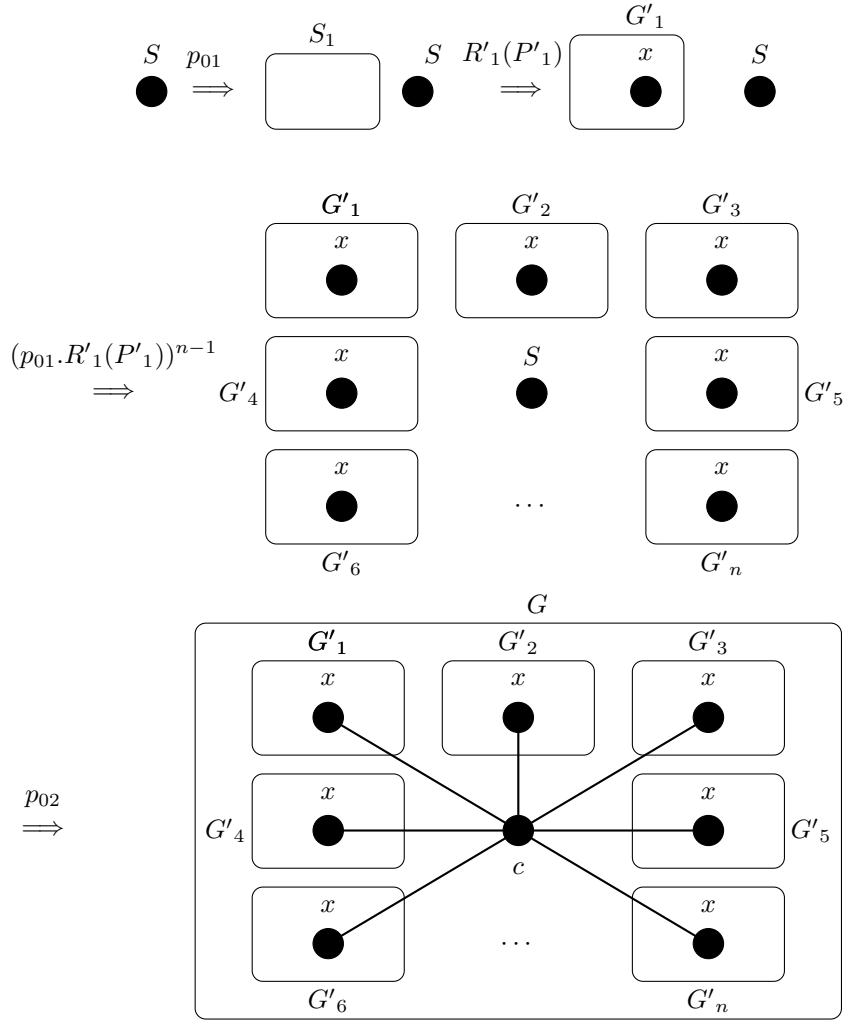
\section{Conclusion}\label{conclo}
In this paper we have proved  that $nc-eNCE$  graph grammars and its variants are closed under some simple graph operations. These closure properties could  be applied in situations where we wish to generate/study the classes of graph languages which could be recursively constructed from simpler families of graphs.  The  possibility for exploration into closure properties of these classes of grammars with respect to other operations remains open. Other properties such as the membership problem also  remain to be examined.
\bibliography{sn-bibliography}

\end{document}